\documentclass[11pt]{article}
\usepackage{latexsym, amsmath, amsthm, amsfonts, amssymb, enumitem,
  color, xspace, graphicx, ifthen}
\usepackage[sort&compress,square,numbers]{natbib}
\usepackage[colorlinks]{hyperref}
\usepackage[utf8]{inputenc}

\usepackage{environ}
\makeatletter
\NewEnviron{tagleft}{\tagsleft@true\BODY}  
\NewEnviron{tagright}{\tagsleft@false\BODY}
\NewEnviron{hide}{}                        
\makeatother

\bibliographystyle{ims}
\graphicspath{{images-crop/}}

\addtolength{\oddsidemargin}{-.75in}
\addtolength{\textwidth}{1.5in}
\addtolength{\topmargin}{-1in}
\addtolength{\textheight}{1.6in}

\definecolor{BLUE}{rgb}{0,0,1}
\definecolor{RED}{rgb}{1,0,0}
\definecolor{GREEN}{rgb}{0,1,0}
\definecolor{PURPLE}{rgb}{1,0,1}

\def\RR{\mathbb{R}}

\newtheorem{thm}{Theorem}[section]
\newtheorem{cor}[thm]{Corollary}
\newtheorem{lem}[thm]{Lemma}
\newtheorem{exmp}[thm]{Example}

\def\bm#1{\boldsymbol{#1}}

\def\Cbr#1{\left\{#1\right\}}   

\def\FDP{\mathrm{FDP}}
\def\NTD{\mathrm{NTD}}

\def\SE{\mathrm{SE}}

\def\cov{\mathrm{Cov}}          
\def\mean{\mathrm{E}}           
\def\pr{\mathrm{P}}             
\def\lfrac#1#2{#1/#2}
\makeatletter
\def\iid{i.i.d\@ifnextchar.{}{.}}
\makeatother

\def\ex{\mathrm{e}}     
\def\diag{\mathrm{diag}}        
\def\toi{\to\infty}
\def\tr{\mathrm{Tr}}            
\def\tp{\sp{\mathrm t}}              

\def\trf{\varphi}               
\def\gv{\,|\,}                  
\def\rx{\epsilon}

\def\eim{\varrho}                 
\def\slem{\eim_2}           
\def\tlem{\eim_3}           
\def\psig{\psi}                  

\def\enum#1#2{#1_1, \ldots, #1_#2}
\def\cum#1#2{#1_1 + \ldots + #1_#2}

\def\showfig#1{\scalebox{.5}{\includegraphics{#1.pdf}}}

\begin{document}
\begin{center}
  \textbf{
    \uppercase{
     False discovery variance reduction in large scale simultaneous hypothesis tests
    }
  }
  \\[2ex]
  Sairam Rayaprolu and Zhiyi Chi\footnote{Department of Statistics,
    University of Connecticut, 215 Glenbrook Road, U-4120, Storrs, CT
    06269, sairamray@gmail.com and zhiyi.chi@uconn.edu.}
\end{center}

\begin{abstract}
  Statistical dependence between hypotheses poses a significant
  challenge to the stability of large scale multiple hypotheses
  testing.  Ignoring it often results in an unacceptably large
    spread in the false positive proportion even though the average
    value is acceptable \cite{owen_2005,
    fanhangu_2012, qiukleyak_2005,schlin_2011}.  However, the statistical
  dependence structure of data is often unknown.  Using a generic
  signal-processing model, Bayesian multiple testing, and
  simulations, we demonstrate that the variance of the false positive
  proportion can be substantially reduced even under unknown short
  range dependence.  We do this by modeling the data generating
  process as a stationary ergodic binary signal process embedded in
  noisy observations.  We derive conditional probabilities needed for
  the Bayesian multiple testing by incorporating nearby
  observations into a second order Taylor series approximation.
  Simulations under general conditions are carried out to assess
  the validity and the variance reduction of the approach.
  Along the  way, we address the problem of sampling a
  random Markov matrix with specified stationary distribution and
  lower bounds on the top absolute eigenvalues, which is of interest
  in its own right.

  \medbreak\noindent
  \emph{Keywords and phrases:\/} multiple hypothesis testing, FDR, HMM

  \medbreak\noindent
  \emph{MSC 2010 subject classifications:\/} Primary 62H15, 62M02,
  62M07
\end{abstract}

\section{Introduction}\label{sec:intro}
The problem of multiplicity in simultaneous hypothesis testing has
been well recognized in experimental research and in statistics for
more than half a century.  However, contemporary technologies have
multiplied the scale of the problem by several orders of magnitude.
Whereas classical problems were concerned with simultaneous inference
of several or tens of hypotheses, present-day data intensive
technologies such as DNA microarray in genomics and functional
magnetic resonance imaging (fMRI) in brain imaging routinely produce
tens or even hundreds of thousands of simultaneous hypotheses.

To get an idea on the severity of the problem of multiplicity,
consider, for example, fMRI used to detect brain areas that activate
in response to a specific task.  A single fMRI can contain hundreds of
thousands of multivariate observations, each being a time series of
measurements at a spatial location, or ``voxel'',  in the brain
\cite{linmej_2015}.  A typical analysis on such a dataset uses a
``massively univariate approach''.  For each voxel, a separate
hypothesis test is performed on the time series of its measurements
with the null  hypothesis being that there is no activation in the
voxel.   Even with no brain activation in any voxel, if there are, say
100,000 voxels,   a voxel-wise significance level of $0.05$ would
result in roughly 5000   significant $p$-values, in other words, 5000
false positives.  Therefore it is necessary to choose a more stringent,
sufficiently small $p$-value threshold.  However, a $p$-value
threshold that is too stringent will result in unacceptably many false
negatives often resulting in failure to detect any voxels with
significant brain activation.  Determining a $p$-value threshold that
strikes a balance between the two extremes is at the heart of the
large scale multiple testing problem.  See \cite{farcomeni_2008, roquain_2011}
for a broad overview of research in multiple testing.

False positives are usually referred to as \emph{false discoveries} in
multiple testing literature.  The outcomes of multiple testing are
summarized in the table below.  Here, $V$ is the number of Type I
errors (false discoveries), $T$ the number of false non-discoveries
(false negatives), $S$ the number of true discoveries (NTD) or true
positives and $U$ the number of true non-discoveries (true negatives).
$R$ is the total number of discoveries (positives) and $A$ the total
number of non-discoveries (negatives).  Finally $m_0$ is the  number
of true nulls and $m$ the total number of hypotheses.
\begin{center}
  \begin{tabular}{|l|ccr|}
    \hline
    Hypothesis \rule{0em}{1.2em}& Accept & Reject & Total
    \\\hline
    True null \rule{0em}{1.2em}& $U$ & $V$ & $m_0$\\
    False null& $T$ & $S$ & $m_1$ \\
    & $A$ & $R$ & $m$ \\\hline
  \end{tabular}
\end{center}

Procedures that reliably control Type I errors have been a focus in
large scale multiple testing.  To start with, a numerical criterion on
the testing result has to be specified.  Since its introduction by
Benjamini and Hochberg \cite{benhoc_1995}, the false discovery rate
(FDR) has been widely used.  Consequently, a great deal of research
has been focused in particular on procedures that tend to have a low FDR
irrespective of the distribution of data or the application.

By definition, the FDR is the expectation of $\lfrac V{\max\!\Cbr{R,
    1}}$.  The latter, known as the false discovery  proportion (FDP),
is a random variable and summarizes the Type I errors of the multiple
testing on the data at hand.  When its variance is high there will be
a high  probability of it being substantially larger than the FDR,
making the testing result less reliable even if \emph{on average\/}
it is at an acceptable level.  Since it is usually not feasible to
improve the reliability by repeating a large scale
multiple testing procedure numerous times, it is important to achieve
low variability of the FDP.  In a similar spirit, a performance
indicator of practical importance is \emph{stability}, defined as the
the standard deviation of the total number of discoveries
\cite{gorglaqiuyak_2007, li_2015}.

Assuming independence of test statistics of the hypotheses, the FDR
can be controlled by the well-known Benjamini-Hochberg (BH) procedur
e
\cite {benhoc_1995, stotaysie_2004}.  However, in practice the data
in large scale multiple testing almost always has certain
statistical dependence.  Under various assumptions of dependence, the
BH procedure is still valid, i.e., able the bound the FDR at a desired
level \cite{wu_2008, benyek_2001, sarkar_2002, sarkar_2006,
  clahal_2009}.  On the other hand, the FDR is mostly, though not
entirely, concerned with Type I errors.  There are other important
metrics to evaluate a multiple testing procedure.  Besides the
aforementioned metrics on variability, a key metric is \emph{power},
defined as $\mean(S)$ or $\mean(S/m_1)$.  When several aspects are
taken into account, in the presence of substantial dependence between
the nulls, the overall performance of the BH procedure deteriorates,
sometimes severely \cite {efron_2007,  fanhangu_2012, owen_2005, qiukleyak_2005, roqvil_2011}.

To tackle statistical dependence in multiple testing, some
studies have taken a probabilistic approach by modeling the data as
stochastic processes with nontrivial dependence structure.  They use
specific models such as hidden Markov models \cite{suncai_2009,
  wu_2008}, Markov random fields \cite{ngumclche_2014,liuzhapag_2016}, or Gaussian
random fields \cite{chesch_2017, schgavadl_2011}. The main strength of
these models is their tractability.  However, the
statistical dependence structure of a dataset is often unknown or only
partially known, which is in fact a major obstacle to multiple
testing.  For example, recent research has shown that that the
principal cause of the invalid cluster inferences in fMRI is spatial
autocorrelation functions that do not follow the assumed Gaussian
shape \cite{eknikn_2017}.

In this paper, we consider the FDR control with acceptable variability
in the  FDP, when the knowledge about the dependence structure of the
nulls is very limited.  We propose a Bayesian approach that is based
on a generic signal-processing model and demonstrate through
simulations that the variance of the FDP can be substantially reduced
even under unknown short range dependence.  While the nulls
are modeled by a hidden process, no assumptions are made on the
dependence structure of the process.  Thus, the approach is
nonparametric regarding the dependence between the nulls.  Under the
Bayesian setting, it is known that multiple testing based on the
conditional probabilities of the nulls is optimal under certain
criteria (\cite{chi_2011, sarzhogho_2008}; cf.\ Section
\ref{s:relatedwork}).  However, the conditional probabilities are not
available due to the unknown dependence structure of the nulls.  The
idea of our approach is to treat the conditional probabilities as
functions of the signal strength and approximates them by a second
order Taylor series in the weak-signal regime.  In addition to
providing much technical convenience, consideration of the weak-signal
regime is of practical relevance due to the challenge it poses to
multiple testing for very noisy data \cite {linmej_2015}.  The
resulting approximation has two novel features.  First, it naturally
incorporates clusters of neighboring observations.  Second, it does
this without knowledge of the joint distribution of neighboring nulls.
Instead, it utilizes their empirical moments.  Expansion in terms of
moments or cumulants is a defining feature in well known
approximations such as normal approximation and Edgeworth
approximation.  It is interesting that our approximation has a similar
feature, even though it results from a Taylor expansion in a parameter
not directly related to the hidden process of the nulls.

After a brief literature review in Section \ref{s:relatedwork}, in
Section \ref{s:usage}, we introduce the signal-processing model used
in the Bayesian approach and then derive the second order Taylor
approximations to the conditional probabilities.  In Section
\ref{s:numerical}, we report numerical studies to assess multiple
testing based on approximated conditional likelihoods in the presence
of dependence.  The results show that the multiple testing can
  control the FDR while having a much smaller variance in the FDP
  comparing to the BH procedure.  To conduct the numerical study, we
have to address the problem of sampling a random Markov matrix with
specified stationary distribution and   lower bounds on the top
absolute eigenvalues.  This sampling issue is of interest in its own
right.  Finally, Section \ref{s:conclusions} makes some concluding
remarks.

\section{Related works and basic setup} \label{s:relatedwork}
There is now a large literature to address statistical dependence in
multiple testing \cite{fanhan_2017, fanhangu_2012, schlin_2011,
  efron_2010, suncai_2009,clahal_2009, leesto_2008, efron_2007,
  sarkar_2006, owen_2005, qiukleyak_2005, sarkar_2002, benyek_2001,
stotaysie_2004, wu_2008, genwas_2004}.
Sun and Cai \cite {suncai_2009}
proposed to exploit the dependence structure of hypotheses from a
decision-theoretical point of view and considered a Hidden Markov
Model (HMM).  Let $\enum H m$ be a set of hypotheses and $\eta_j=0$ if
$H_j$ is true, and 1 otherwise.  Assume that $\bm\eta = (\enum \eta
m)$ form a Markov chain and $\bm X = (\enum X m)$ is a set of
observations that are independent conditional on $\bm\eta$.  Then they
showed that a thresholding procedure for the conditional likelihoods
$\pr(H_j \text{ is false}\gv\bm X) = \pr(\eta_j=1\gv\bm X)$ minimizes
$\mean(\lambda V + T)$ for some $\lambda>0$.  From a Bayesian
perspective, the Markov process is a prior on $\bm\eta$ and
$\pr(\eta_j=1\gv\bm X)$ the posterior likelihood for $H_j$ being true.
In an earlier work \cite{sarzhogho_2008}, it was shown that for any
prior on $\bm\eta$ and any type of data $\bm X$, the following BH-type
procedure, which will be referred to as the Bayes BH procedure,
controls the FDR at a fixed target level $\alpha\in (0,1)$.

\begin{center}
  \textbf{Bayes BH Procedure}\\[2ex]
  \begin{minipage}{.8\textwidth}
    \begin{enumerate}[itemsep=0ex,parsep=0ex]
    \item Sort $\pr(\eta_j=1\gv\bm X)$ in ascending order as $q_1 \le
      q_2 \le \ldots \le q_m$.
    \item If $q_m < 1- \alpha$, then accept all $H_j$, otherwise
      reject all $H_j$ with
      \begin{align*}
        \pr(\eta_j=1\gv \bm X)\ge q_R,
      \end{align*}
      where  $R = \max\{1 \le k \le m: k^{-1}\sum^k_{j=1} q_{m - j + 1}\ge
      1-\alpha \}$.
    \end{enumerate}
  \end{minipage}
\end{center}

It was shown in \cite{chi_2011} that the Bayes BH procedure is power
optimal \emph{a posteriori\/}, that is, it has the largest
$\mean(S\gv\bm X)$ among all procedures with $\mean(\FDP \gv \bm X)\le
\alpha$.  For comparison and later use, the BH procedure is based on
$p$-values of hypotheses and can be described as follows
\cite{simes_1986, benhoc_1995, stotaysie_2004}.

\begin{center}
  \textbf{BH Procedure}\\[2ex]
  \begin{minipage}{.8\textwidth}
    \begin{enumerate}[itemsep=0ex, parsep=0ex]
    \item Sort the $p$-values in ascending order as $p_1 \le p_2 \le
      \ldots \le p_m$.

    \item If $p_1>\alpha/m$, then accept all $H_j$, otherwise reject
      all $H_j$ with $p_j\le p_R$, where $R = \max\{1 \le k \le m : p_k
      \le k \alpha/m\}$.
    \end{enumerate}
  \end{minipage}
\end{center}

The above results highlight the importance of $\pr(\eta_j=1\gv \bm
X)$ in multiple testing.  Once they are computed, the Bayes BH
procedure can be immediately applied.  However, to precisely evaluate
the conditional likelihoods either by closed-form calculation or
simulation, it is necessary to elaborate on the joint dependence
structure of $\bm\eta$ and $\bm X$, which is typically impossible.  In
addition, from a signal processing point of view, multiple testing
becomes useful typically when signals are relatively weak.   We
therefore wish to approximate $\pr(\eta_j=1\gv\bm X)$ using partial
information on the dependence structure that is easy to infer, with
reasonable accuracy especially when the signal is relatively weak.

In \cite{chi_2011}, under the HMM model, $\pr(\eta_j\gv\bm X)$ was
approximated by a second order Taylor series in signal strength.
The Taylor series only uses the the joint distribution of
  triples of nulls.  It turns out that under suitable conditions,
  these distributions can be nonparametrically estimated provided that
  a sample of uncontaminated signals is available.  The Markovian
  structure is not necessary for the Taylor series approximation.
This observation is the starting point of the development in the
following sections.

\section{Approximate conditional likelihoods for multiple
  testing} \label{s:usage}
As mentioned at the end of last section, our Bayesian approach is to
apply the Bayes BH procedure to the conditional likelihoods of
false nulls given the data.  In this section, we first describe a
generic signal processing framework for the Bayesian multiple testing.
Then we derive a second order Taylor series to approximate the
conditional  likelihoods.  Finally, we
specialize to the case where the data consists of observations that
can be described as outcomes of localized interactions between
attenuated signals and noise.  We assume that the parametric form of
the signal-noise interaction is known.  However, throughout the
derivation, no parametric form of the joint distribution of the
signals is assumed.

\subsection{Signal processing point of view of multiple testing}
Let $(H_t)_{t\in T}$ be a set of hypotheses indexed by a finite set
$T$, which can be as simple as $\{1, \ldots, m\}$, or endowed with
certain topology to incorporate, say spatial, relationships between
the  hypotheses.  The ``signal'' is a set of Bernoulli random
variables indexed by $T$, henceforth denoted by $\bm\eta = (\eta_t,\in
T)$.   For each $t\in T$, let $\eta_t = 0$ if $H_t$ is true and 1
otherwise.  The values of $\eta_t$, however, are unobservable, and the
task is to determine which $H_t$ are false, or equivalently, which
$\eta_t$ are equal to 1.  In contrast to \cite{suncai_2009,chi_2011},
no dependence structure of $\bm\eta$ is assumed.

On the other hand, let $\bm X$ denote a vector of observations.  In
the most general setting, $\bm X$ can be of any type.  We assume that
conditional on $\bm\eta$, $\bm X$ has a probability density of the form
\begin{align}  \label{signal-density}
  \rho(\bm x\gv \bm\eta) = \ex^{q(\bm x, \rx\bm\eta)},
\end{align}
where the function $q$ is assumed to be known.  The parameter $\rx>0$
in \eqref{signal-density} will be referred to as ``signal strength''.
Observe that as $\rx=0$, $\bm X$ is independent of $\bm\eta$, and
hence contains no information about the hypotheses.  Under the
setting, for each $t\in T$, the conditional likelihood of $H_t$ being
true given the data $\bm X$ is $\pr(\eta_t=1\gv \bm X)$.  Thus, making
decisions on the nulls is equivalent to recovering $\bm\eta$ from the
data $\bm X$.

\medbreak\noindent
{\it Notation.}  We will treat each $(x_t)_{t\in T}$ as a
column vector.  Thus, for $\bm u = (u_t)_{t\in
  T}$ and $\bm v =   (v_t)_{t\in T}$, $\bm u\tp \bm v= \sum_{t\in T}
u_t v_t$.  If $p$ is a function on $\RR$, then $p(\bm x)$ is a
shorthand for $(p(x_t))_{t\in T}$.  $(M_{st})_{s,t\in
  T}$ denotes a matrix with rows indexed by $s$ and columns indexed by
$t$.  Finally, $\diag(\bm x)$ denotes the diagonal matrix
$(M_{st})_{s,t\in T}$ with $M_{t t}=x_t$ and $M_{s t}=0$ for $s\ne t$.

\subsection{Taylor series approximation of conditional likelihoods}
We next derive an approximation to $\pr(\eta_t=1\gv\bm X)$ using a
second order Taylor series in $\rx$.  The method, however, can
attain higher order Taylor series as well.  The approximation is
valid when the signal strength $\rx$ is low, precisely when the power
of multiple testing suffers the most.  While our objective is to
approximate $\pr(\eta_t = 1\gv\bm X)$, the conditional log-likelihood
ratios
\begin{hide}When the effect of the
hypotheses on the data is ``localized'', i.e., $\bm X = (X_t)_{t\in
  T}$ with
$X_t$ being independent conditionally on $\bm\eta$, the approximation
takes a simpler form.  It only needs as inputs the first three moments
of $\bm\eta$ and the distribution of $X_t$ conditional on $\eta_t$ for
individual $t$, as opposed to information on how the whole $\bm X$
depends on $\bm\eta$.  We will argue that in many settings of signal
processing, both the moments of $\bm\eta$ and the local dependence may
be inferred with reasonable precision.  Results of simulations and
computations for the approximated conditional likelihoods are
presented in Section \ref{s:numerical}.
\end{hide}
\begin{align} \label{e:cllr}
  r_t(\bm X)
  =
  \ln \frac{\pr(\eta_t=1\gv\bm X)}{\pr(\eta_t=0\gv\bm X)}, \quad t\in
  T
\end{align}
are more convenient to work with.  Each $r_t(\bm X)$ is simply the
logit transform of $\pr(\eta_t=1\gv\bm X)$.  Conversely, the latter is
the logistic transform of the former, i.e.,
\begin{align} \label{e:logistic}
  \pr(\eta_t=1\gv\bm X)
  =
  \frac{1}{1+ \exp\{-r_t(\bm X)\}}.
\end{align}

Henceforth, quantities derived from taking expectation conditional on
$\eta_t = i$ will be indexed by $i$ and $t$, e.g.,
\begin{align*}
  \mean_{i t}(\cdot) = \mean(\cdot\gv \eta_t=i),
  \quad
  \rho_{i t}(\cdot) = \rho(\cdot\gv \eta_t = i),
  \quad
  \cov_{i t}(\cdot) = \cov(\cdot\gv \eta_t = i).
\end{align*}
By Bayes rule, for $t\in T$ and $i=0,1$, $\pr(\eta_t=i\gv\bm X)
\propto \pr(\eta_t = i)\rho_{it}(\bm X)$.  Then
\begin{align}  \label{lboldsymbolayes}
  r_t(\bm X)
  =
  \ln\frac{\pr(\eta_t = 1)}{\pr(\eta_t = 0)}+
  \ln\frac{\rho_{1 t}(\bm X)}{\rho_{0 t}(\bm X)}
\end{align}
and it is essential to evaluate the second term on the r.h.s.\ of
\eqref{lboldsymbolayes}.  By conditioning,
\begin{align} \label{rhosigma}
  \rho_{i t}(\bm X)
  =
  \sum_{\bm\sigma:\sigma_t = i}
  \pr_{i t}(\bm\eta=\bm\sigma)
  \rho(\bm X \gv \bm\eta = \bm\sigma).
\end{align}

\begin{lem} \label{l:binary}
  Let $\bm\eta$ be a binary process on $T$.  Given $g\in C^2(\RR^T,
  \RR)$, denote its gradient and Hessian by $g'$ and $g''$,
  respectively.  Then as $\rx\to 0$,
  \begin{align}
    \ln \mean[\ex^{g(\rx\bm\eta)}]
    &=
    g(\bm 0) + [g'(\bm 0)\tp \mean(\bm\eta)]\rx
    +  \frac{1}{2}[\mean(\bm\eta\tp g''(\bm 0)
    \bm\eta) + g'(\bm 0)\tp \cov(\bm\eta) g'(\bm 0)]\rx^2 +
    o(\rx^2).
    \label{energy}
  \end{align}
  Moreover, if $\bm\pi = (\pi_t)_{t\in T}$ and $\bm
  J = (J_{s t})_{s,t\in T}$, with $\pi_t = \pr(\eta_t = 1)$ and $J_{s
    t} = \pr(\eta_s = \eta_t = 1)$, then
  \begin{align}
    \mean(\bm\eta) = \bm\pi,
    \quad
    \mean(\bm\eta\tp g''(\bm 0) \bm\eta) = \tr(g''(\bm 0) \bm J),
    \quad
    \cov(\bm\eta) = \bm J - \bm\pi\bm\pi\tp.
    \label{binary}
  \end{align}
\end{lem}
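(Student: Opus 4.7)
The plan is to derive \eqref{energy} by Taylor expanding $g$ around $\bm 0$ to second order in $\rx$, exponentiating, taking expectation term by term, and then applying the series for $\ln(1+\cdot)$, retaining all contributions up to $O(\rx^2)$. A convenient feature of the binary setting is that $\bm\eta$ takes values in the finite set $\{0,1\}^T$, so the Taylor remainder from $g \in C^2$ is uniformly $o(\rx^2)$ in $\bm\eta$, and no measurability or integrability subtlety arises when the remainder is pushed through the expectation.

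Concretely, I would first write
\begin{align*}
  g(\rx\bm\eta) - g(\bm 0)
  = \rx\, g'(\bm 0)\tp \bm\eta
  + \tfrac{\rx^2}{2}\, \bm\eta\tp g''(\bm 0)\bm\eta
  + o(\rx^2),
\end{align*}
uniformly in $\bm\eta$, and then expand the exponential of the right-hand side to second order. The linear part contributes $\rx\, g'(\bm 0)\tp\bm\eta$; the quadratic part of $g$ contributes $\tfrac{\rx^2}{2}\bm\eta\tp g''(\bm 0)\bm\eta$; and the square of the linear part in the $h^2/2$ term of $\ex^h$ contributes $\tfrac{\rx^2}{2}(g'(\bm 0)\tp\bm\eta)^2$. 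Taking expectation gives $\mean[\ex^{g(\rx\bm\eta)}] = \ex^{g(\bm 0)}(1 + A)$ with $A = O(\rx)$, and then $\ln(1+A) = A - A^2/2 + o(\rx^2)$. The only $O(\rx^2)$ piece of $A^2/2$ is $\tfrac{\rx^2}{2}(g'(\bm 0)\tp\mean(\bm\eta))^2$. The key cancellation is thus
\begin{align*}
  \tfrac{1}{2}\mean[(g'(\bm 0)\tp\bm\eta)^2]
  - \tfrac{1}{2}(g'(\bm 0)\tp\mean(\bm\eta))^2
  = \tfrac{1}{2}\, g'(\bm 0)\tp\cov(\bm\eta)\, g'(\bm 0),
\end{align*}
which converts the second moment into the covariance form and yields \eqref{energy}.

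For \eqref{binary}, everything is elementary once one uses that $\eta_t \in \{0,1\}$: then $\mean(\eta_t) = \pr(\eta_t=1) = \pi_t$, while $\mean(\eta_s\eta_t) = \pr(\eta_s = \eta_t = 1) = J_{st}$, so $\cov(\bm\eta) = \bm J - \bm\pi\bm\pi\tp$. Expanding the quadratic form and using symmetry of $g''(\bm 0)$ and $\bm J$,
\begin{align*}
  \mean(\bm\eta\tp g''(\bm 0)\bm\eta)
  = \sum_{s,t\in T} g''(\bm 0)_{s t}\, J_{s t}
  = \tr(g''(\bm 0)\bm J).
\end{align*}
I do not anticipate any genuine obstacle; the main point of care is simply to track which $\rx^2$ contributions come from the quadratic term of $g$, from squaring the linear term inside the exponential, and from squaring the linear term inside the logarithm, and to verify that the latter two combine precisely into the covariance factor.
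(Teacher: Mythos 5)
Your proposal is correct and follows essentially the same route as the paper: the paper dismisses \eqref{energy} as a "well known" standard calculation, which is exactly the Taylor-expand/exponentiate/take-log computation you carry out (correctly noting that finiteness of $T$ makes the remainder uniform over $\{0,1\}^T$), and your verification of \eqref{binary} matches the paper's up to the cosmetic difference that the paper obtains $\mean(\bm\eta\tp g''(\bm 0)\bm\eta)=\tr(g''(\bm 0)\bm J)$ via the cyclic property of the trace while you expand the double sum and invoke symmetry of $\bm J$.
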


\begin{proof}
  Eq.~\eqref{energy} is well known and can be checked by standard
  calculation.  Since $\eta_t$ is binary, $\mean(\eta_t) = \pr(\eta_t
  = 1) = \pi_t$, yielding the first equation in \eqref{binary}.
  Likewise, $\mean(\bm\eta \bm\eta\tp) = \bm J$.  Then the second
  equation in \eqref{binary} follows from
  \begin{align*}
    \mean(\bm\eta\tp g''(\bm 0) \bm\eta)
    =
    \mean(\tr(\bm\eta\tp g''(\bm 0) \bm\eta))
    =
    \mean(\tr(g''(\bm 0) \bm\eta \bm\eta\tp))
    =
    \tr(\mean(g''(\bm 0) \bm\eta \bm\eta\tp))
    =
    \tr(g''(\bm 0) \bm J).
  \end{align*}
  The last equation in \eqref{binary} is also easy to verify.
\end{proof}

The desired second order Taylor series approximation to $r_t(\bm X)$
is as follows.  Once an approximate value is obtained, an approximate
value of $\pr(\eta_t =1\gv \bm X)$ can be obtained by \eqref
{e:logistic}.
\begin{thm} \label{t:post-ratio}
  Let $q$ be as in \eqref{signal-density}.  Suppose that for each $\bm
  x$, $q(\bm x, \bm\theta)$ as a function of $\bm\theta\in\RR^T$
  belongs to $C^2$.  Let $\gamma(\bm x)$ and $H(\bm x)$ be the
  gradient and Hessian of $q(\bm x, \bm\theta)$ at $\bm\theta=\bm 0$,
  respectively.  Then
  \begin{align} \label{PL-R}
    r_t(\bm X)
    &=
    \ln \frac{\pr(\eta_t = 1)}{\pr(\eta_t = 0)}
    + \gamma(\bm X)\tp [\mean_{1t}(\bm\eta) - \mean_{0t}(\bm\eta)]\rx
    \nonumber \\
    &\quad
    + \frac{1}{2} [\mean_{1t}(\bm\eta\tp H(\bm X) \bm\eta) -
    \mean_{0t}(\bm\eta\tp H(\bm X) \bm\eta)] \rx^2
    \nonumber \\
    &\quad
    +
    \frac{1}{2} \gamma(\bm X)\tp [\cov_{1t}(\bm\eta) -
    \cov_{0t}(\bm\eta)] \gamma(\bm X) \rx^2 + o_{t,\bm X}(\rx^2)
  \end{align}
  as $\rx\to 0$, where $o_{t,\bm X}(\cdot)$ means that the implicit
  constant depends on $t$ and $\bm X$.
\end{thm}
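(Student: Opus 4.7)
The plan is to reduce Theorem \ref{t:post-ratio} directly to Lemma \ref{l:binary} by rewriting $\rho_{it}(\bm X)$ as a conditional expectation of an exponential. Starting from \eqref{lboldsymbolayes}, the first term $\ln[\pr(\eta_t=1)/\pr(\eta_t=0)]$ already appears in the conclusion, so only the log-ratio $\ln[\rho_{1t}(\bm X)/\rho_{0t}(\bm X)]$ needs to be expanded.

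The crucial observation is that by \eqref{rhosigma} and the assumed form \eqref{signal-density},
\begin{align*}
  \rho_{it}(\bm X)
  = \sum_{\bm\sigma:\sigma_t = i} \pr_{it}(\bm\eta = \bm\sigma)\,\ex^{q(\bm X,\rx\bm\sigma)}
  = \mean_{it}\!\bigl[\ex^{q(\bm X,\rx\bm\eta)}\bigr].
\end{align*}
Thus, treating $\bm X$ as fixed and setting $g(\bm\theta) := q(\bm X,\bm\theta)$, we are in the precise setting of Lemma \ref{l:binary}, except that the expectation is taken under the conditional law of $\bm\eta$ given $\eta_t = i$. Since that conditional law is itself a (degenerate) binary law on $\{0,1\}^T$, Lemma \ref{l:binary} applies verbatim with $\mean,\cov$ replaced by $\mean_{it},\cov_{it}$. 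By hypothesis, $g'(\bm 0) = \gamma(\bm X)$ and $g''(\bm 0) = H(\bm X)$, so
\begin{align*}
  \ln\rho_{it}(\bm X)
  = q(\bm X,\bm 0) + \gamma(\bm X)\tp \mean_{it}(\bm\eta)\,\rx
  + \tfrac{1}{2}\bigl[\mean_{it}(\bm\eta\tp H(\bm X)\bm\eta)
  + \gamma(\bm X)\tp \cov_{it}(\bm\eta)\gamma(\bm X)\bigr]\rx^2
  + o(\rx^2).
\end{align*}

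Subtracting the $i=0$ expansion from the $i=1$ expansion makes the $q(\bm X,\bm 0)$ term cancel and groups the remaining terms exactly as in \eqref{PL-R}. Combining with \eqref{lboldsymbolayes} finishes the proof.

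The only genuine subtlety is the remainder. Because $g'(\bm 0)$ and $g''(\bm 0)$ depend on $\bm X$, the implicit constant in the $o(\rx^2)$ term produced by Lemma \ref{l:binary} depends on both $\bm X$ and $t$ (the latter enters through the choice of conditioning $\eta_t = i$); this is why the statement uses $o_{t,\bm X}(\rx^2)$ rather than uniform little-$o$. I would note this explicitly but not belabor it, since the $C^2$ assumption on $q(\bm x,\cdot)$ together with boundedness of $\bm\eta\in\{0,1\}^T$ makes the Taylor remainder in $\rx$ straightforward to control pointwise in $\bm X$. No further analytic work is needed.
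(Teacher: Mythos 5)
Your proposal is correct and follows essentially the same route as the paper's own proof: rewrite $\rho_{it}(\bm X)$ as $\mean_{it}[\ex^{q(\bm X,\rx\bm\eta)}]$, apply Lemma \ref{l:binary} to $g(\cdot):=q(\bm X,\cdot)$ under the conditional law given $\eta_t=i$, subtract the two expansions, and combine with \eqref{lboldsymbolayes}. Your added remark on why the remainder is only pointwise in $t$ and $\bm X$ is a fair elaboration of what the paper leaves implicit.
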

\begin{proof}
  From \eqref{rhosigma} and the assumption, $\rho_{i t}(\bm X) =
  \mean_{i t}[\rho(\bm X\gv\bm\eta)] = \mean_{i t}[\ex^{q(\bm
    X,\rx\bm\eta)}]$.  Note that $\bm X$ is fixed.  Applying Lemma
  \ref{l:binary} to $g(\cdot) := q(\bm X, \cdot)$, for $i=0,1$,
  \begin{align*}
    \ln \rho_{i t}(\bm X)
    &=
    q(\bm X,\bm 0) + [\gamma(\bm X)\tp\mean_{i t}(\bm\eta)]\rx
    \\
    &\quad
    +
    \frac{1}{2}[\mean_{i t}(\bm\eta\tp H(\bm X) \bm\eta) +
    \gamma(\bm X)\tp \cov_{i t}(\bm\eta) \gamma(\bm X)] \rx^2 +
    o_{t,\bm X}(\rx^2).
  \end{align*}
  This together with \eqref{lboldsymbolayes} yields the claimed
  expansion.
\end{proof}

\subsection{Localized dependence of data on hypotheses}

We now consider the case where the dependence of $\bm X$ on $\bm\eta$
is localized.  Specifically, suppose $\bm X = (X_t)_{t\in T}$ and the
function $q$ in \eqref{signal-density} can be written as
\begin{align} \label{q-localized}
  q(\bm x, \bm\theta) = \sum_t q_t(x_t, \theta_t).
\end{align}
Under \eqref{q-localized}, $X_t$ are independent conditionally on
$\bm\eta$.  Then from Theorem \ref{t:post-ratio}, we have the
following.
\begin{cor} \label{c:post-ratio}
  Suppose that for each $x$ and $t\in T$, $q_t(x, \theta)$ as a
  function of $\theta\in\RR$ belongs to twice continuously
  differentiable.  Let $\gamma_t(x)$ and
  $k_t(x)$ be the first and second derivatives of $q_t(x,\theta)$ at
  $\theta=0$, respectively.  Let $\gamma(\bm x) =
  (\gamma_t(x_t))_{t\in T}$ and $k(\bm x) = (k_t(x_t))_{t\in T}$.
  Then as $\rx\to 0$,
  \begin{align} \label{PL-R-localized}
    r_t(\bm X)
    &=
    \ln \frac{\pr(\eta_t = 1)}{\pr(\eta_t = 0)}
    + \gamma(\bm X)\tp [\mean_{1t}(\bm\eta) - \mean_{0t}(\bm\eta)]\rx
    \nonumber \\
    &\quad
    + \frac{1}{2} k(\bm X)\tp [\mean_{1t}(\bm\eta) -
    \mean_{0t}(\bm\eta)] \rx^2
    \nonumber\\
    &\quad
    +
    \frac{1}{2} \gamma(\bm X)\tp [\cov_{1t}(\bm\eta) -
    \cov_{0t}(\bm\eta)] \gamma(\bm X) \rx^2 + o_{t,\bm X}(\rx^2).
  \end{align}
\end{cor}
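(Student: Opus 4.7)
The plan is to derive \eqref{PL-R-localized} by simply specializing Theorem \ref{t:post-ratio} to the separable form \eqref{q-localized} and then exploiting the fact that $\eta_t \in \{0,1\}$, so $\eta_t^2 = \eta_t$.

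First I would compute the gradient and Hessian of $q(\bm x,\bm\theta)$ at $\bm\theta=\bm 0$ under the hypothesis \eqref{q-localized}. Because $q$ is a sum of single-coordinate functions, mixed partials vanish, so the Hessian $H(\bm x)$ is diagonal with entries $k_t(x_t)$, i.e.\ $H(\bm x) = \diag(k(\bm x))$. Likewise the gradient at $\bm\theta=\bm 0$ has components $\gamma_t(x_t)$, which matches the definition of $\gamma(\bm x)$ given in the corollary, so the linear (order $\rx$) term and the covariance (order $\rx^2$) term in \eqref{PL-R} carry over verbatim.

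The only term that needs work is the Hessian term $\mean_{it}(\bm\eta\tp H(\bm X)\bm\eta)$. Since $H(\bm X)$ is diagonal,
\begin{align*}
  \bm\eta\tp H(\bm X) \bm\eta
  = \sum_{s\in T} k_s(X_s)\, \eta_s^2
  = \sum_{s\in T} k_s(X_s)\, \eta_s
  = k(\bm X)\tp \bm\eta,
\end{align*}
where the middle equality uses $\eta_s\in\{0,1\}$. Taking conditional expectations gives $\mean_{it}(\bm\eta\tp H(\bm X)\bm\eta) = k(\bm X)\tp \mean_{it}(\bm\eta)$ for $i=0,1$, and subtracting the $i=0$ and $i=1$ versions produces precisely the second line of \eqref{PL-R-localized}.

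Substituting these identifications into \eqref{PL-R} completes the derivation. There is no real obstacle here: the whole content of the corollary is the reduction of the quadratic form $\bm\eta\tp H(\bm X)\bm\eta$ to the linear form $k(\bm X)\tp\bm\eta$, which is immediate from the diagonal structure of $H$ together with the binary nature of $\bm\eta$; everything else is bookkeeping on top of Theorem \ref{t:post-ratio}.
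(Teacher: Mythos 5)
Your proposal is correct and follows essentially the same route as the paper: both specialize Theorem \ref{t:post-ratio} using $H(\bm x)=\diag(k(\bm x))$ and reduce $\bm\eta\tp H(\bm X)\bm\eta$ to $k(\bm X)\tp\bm\eta$ via $\eta_s^2=\eta_s$ (the paper phrases this with a trace identity, but it is the same computation). No gaps.
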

\begin{proof}
  Comparing \eqref{PL-R} and \eqref{PL-R-localized}, it suffices to
  show
  \begin{align*}
    \mean_{1t}(\bm\eta\tp H(\bm X) \bm\eta) -
    \mean_{0t}(\bm\eta\tp H(\bm X) \bm\eta)
    =
    k(\bm X)\tp [\mean_{1t}(\bm\eta) -
    \mean_{0t}(\bm\eta)].
  \end{align*}
  From \eqref{q-localized}, it is seen that $H(\bm x) = \diag(k(\bm
  x))$.  Then for $i=0,1$,
  \begin{align*}
    \mean_{it}(\bm\eta\tp H(\bm X) \bm\eta)
    =
    \mean_{it}(\tr(\bm\eta\tp H(\bm X) \bm\eta))
    &=
    \mean_{it}(\tr(\diag(k(\bm X)) \bm\eta\bm\eta\tp)
    \\
    &=
    \sum_s k_s(x_s) \pr_{it}(\eta_s=1)
    = k(\bm x)\tp \mean_{it}(\bm\eta),
  \end{align*}
  and hence the desired identity.
\end{proof}

Note that $\mean_{it}(\bm\eta)$ is the vector of $\pr(\eta_s=1\gv
\eta_t=i)$, $s\in T$.  Therefore, it is determined by the first and
second moments of $\bm\eta$.  Likewise, $\cov_{it}(\bm\eta)$ is
determined by the first three moments of $\bm\eta$.  On the other
hand, $\gamma_t(x)$ and $k_t(x)$ are determined by how $X_t$ depends
on $\eta_t$, regardless of the statistical dependence at other $s\in
T$.  Under many settings of signal processing, these quantities can be
estimated directly.  One can regard $\bm\eta$ as a message sent
through a noisy environment, and $\bm X$ the noise-corrupted message
being received.  Before the communication formally starts, one can
first estimate the moments of the distribution of $\bm\eta$.  For
example, if $\bm\eta$ is a long contiguous segment of a stationary and
ergodic binary process, then the moments may be estimated using a
sample of $\bm\eta$ by the law of large numbers.  If in
addition, there is no significant long range dependence in the
distribution of $\bm\eta$, then the evaluation of the moments, say
$\mean(\eta_0 \eta_t)$, can be restricted to $t$ within an
appropriately chosen radius from 0, which significantly reduces the
number of moments to be estimated.  On the other hand, one can use
pre-selected test signals to estimate the dependence of $X_t$ on
$\eta_t$.  If for all $t$, the dependence is the same, i.e., the
function $q_t$ is the same for all $t$, then an estimation may be
obtained from a single observation of the pair $(\bm\eta, \bm X)$,
again by the law of large numbers.

Sometimes it is more convenient to express $X_t$ as a result of
interaction between $\eta_t$ and $Z_t$, where $Z_t$ are \iid and
regarded as noise.  Specifically, suppose
\begin{align} \label{pointwise}
  \begin{array}{ll}
    X_t = \trf_t(\rx\eta_t, Z_t),
    &\text{with}
    \ \bm Z = (Z_t)_{t\in T} \text{ independent of } \bm
    \eta
    \\[1ex]
    &\text{and}\
    Z_t \text{ \iid with density } \ex^{h(z)}.
  \end{array}
\end{align}
Then the distribution of $X_t$ conditional on $\eta_t$ can be derived
from the distribution of $Z_t$ and the functional form of $\trf_t$.  We
next consider two important examples, where the signal noise
interaction is additive and multiplicative, respectively.  The
approximation for the additive interaction will be used in the next
section.

\begin{exmp}[Additive noise] \label{ex:add} \rm
  Let $\trf_t(u,z) = u+z$ for all $t$.   We then find that
  \begin{align*}
    \rho(\bm X\gv\bm \eta) =
    \prod_t \exp(h(X_t - \rx\eta_t))
    = \ex^{q(\bm X, \rx\bm\eta)},
  \end{align*}
  with $q(\bm x, \bm u) = \sum_t h(x_t - u_t)$.  Then $\gamma(\bm
  x)=-h'(\bm x)$ and $H(\bm x) =  \diag(h''(\bm x))$, so by Theorem
  \ref{t:post-ratio},
  \begin{align}
    r_t(\bm X)
    &=
    \ln\frac{\pr(\eta_t=1)}{\pr(\eta_t=0)}
    -h'(\bm X) [\mean_{1t}(\bm\eta) - \mean_{0t}(\bm\eta)]\rx
    \nonumber \\
    &\quad
    + \frac{1}{2}
    h'(\bm X)\tp[\cov_{1t}(\bm\eta) - \cov_{0t}(\bm\eta)]h'(\bm X)
    \rx^2
    \nonumber \\
    &\quad
    +
    \frac{1}{2}
    h''(\bm X)\tp[\mean_{1t}(\bm\eta) - \mean_{0t}(\bm\eta)]\rx^2
    + o_{t,\bm X}(\rx^2), \quad\text{as}\ \rx\to 0.
    \label{additiveResult}
  \end{align}
\end{exmp}

\begin{exmp}[Multiplicative noise] \label{ex:mult} \rm
  Let $\trf_t(u,z) = z \ex^{-u}$ for all $t$.  Then,
  \begin{align*}
    \rho(\bm X\gv \bm\eta)
    =
    \prod_t{\exp(\rx\eta_t)\exp(h(X_t\exp(\rx\eta_t)))}
    = \ex^{q(\bm X, \rx\bm\eta)},
  \end{align*}
  with $q(\bm x, \bm u) =  \sum_t [u_t + h(x_t \ex^{u_t})]$.  Define
  $g(x)=1+x h'(x)$ and $u(x)=x h'(x) + x^2 h''(x)$.  It is easy to get
  $\gamma(\bm x) = g(\bm x)$ and $H(\bm x) = \diag(u(\bm x))$.  Then,
  as in the additive case,
  \begin{align*}
    r_t(\bm X)
    &=
    \ln\frac{\pr(\eta_t=1)}{\pr(\eta_t=0)}
    +g(\bm X) [\mean_{1t}(\bm\eta) - \mean_{0t}(\bm\eta)]\rx
    \\
    &\quad
    + \frac{1}{2}
    g(\bm X)\tp[\cov_{1t}(\bm\eta) - \cov_{0t}(\bm\eta)] g(\bm X)
    \rx^2
    \\
    &\quad
    +
    \frac{1}{2}
    u(\bm X)\tp[\mean_{1t}(\bm\eta) - \mean_{0t}(\bm\eta)]\rx^2
    + o_{t,\bm X}(\rx^2), \quad\text{as}\ \rx\to 0.
  \end{align*}
\end{exmp}

\section{Numerical experiments} \label{s:numerical}

In this section, we report simulation studies to assess multiple
testing based on approximated conditional likelihoods in the presence
of dependence.  We compare the Bayes BH procedure and the BH
procedure.  Recall that the former is based on the approximated
conditional likelihoods of hypotheses given the data while the latter
is based on marginal $p$-values of hypotheses.

\subsection{Sampling of chain of null hypotheses} \label{ss:hmc}

In the simulations, $\bm\eta = (\enum \eta m)$ is a long binary hidden
Markov chain.  This setting is very flexible since hidden Markov
models are dense among essentially all finite-state stationary
processes \cite{kungemath_1995}.  The chain is sampled as follows.

\begin{enumerate}[topsep=1ex, itemsep=0ex, parsep=.2ex,
  leftmargin=6ex]
\item Specify a finite state space $E$, a nonempty strict subset
  $F$ of $E$, and a function $\tau: E \to \{0,1\}$ that maps $F$
  to $0$ and $E\setminus F$ to $1$.

\item Randomly sample a probability vector $\bm\pi = (\pi_s)_{s\in E}$
  and a transition matrix $\bm P$ on the states of $E$ such that
  \begin{align} \label{e:stationary}
    \bm\pi\tp \bm P = \bm\pi\tp
  \end{align}

\item Simulate a stationary Markov chain $\bm M=(\enum M m)$ on $E$
  with stationary distribution $\bm\pi$ and transition matrix $\bm P$.
  Set $\eta_t = \tau(M_t)$ for each $t$.  We will abbreviate the
  transform as
  \begin{align*}
    \bm\eta = \tau(\bm M)
  \end{align*}
  and refer to $\bm M$ as the ``parent'' Markov chain.
\end{enumerate}

In all the simulations of the study, $E=\{1, 2, 3, 4, 5\}$, $F =
\{1,2\}$, and $m = 10^5$.  The function $\tau$ is used only for the
simulation of $\bm\eta$ and not available for multiple testing.
Thus, each realization of $\bm\eta$ is regarded as a large set of
hypotheses with an unknown dependence structure.  To avoid artifacts
that may confound the comparison of multiple testing procedures, both
$\bm\pi$ and $\bm P$ are randomly sampled.  Denoting by $\psig$ the
proportion of false nulls, we need $\bm\pi$ to be such that the
proportion of 1's in $\bm\eta$ is $\psig$.  This is achieved by
setting $\pi_s = (1 - \psig) \zeta_s / \sum_{s \in F} \zeta_s$ for
$s\in F$ and $\pi_s = \psig\zeta_s/\sum_{s\not\in F} \zeta_s$ for
$s\not\in F$, with $\zeta_s$ \iid$\sim$ uniformly distributed on
(0,1).

Given $\bm\pi$, the key component of the simulation is the transition
matrix $\bm P$, which is to be sampled from positive matrices, i.e.,
matrices with all entries being positive.  Consequently, $\bm M$ is
irreducible and aperiodic.  In addition to the constraint \eqref
{e:stationary}, since we need to control the strength of dependence
within $\bm\eta$, it is necessary that the mixing rate of $\bm M$ is
controllable. As is well known, 1 is an eigenvalue of $\bm P$ with
multiplicity one, and all the other eigenvalues of $\bm P$ have
absolute values, i.e. eigenmoduli, strictly less than 1.  Sort the
eigenmoduli in decreasing order as $1=\eim_1(\bm P) > \slem(\bm P)\ge
\ldots\ge \eim_d(\bm P)\ge 0$.  It is also well-known that the second
largest eigenmodulus is an important parameter in determining the
mixing rate of a Markov Chain \cite {rosenthal_1995}. However the
convergence of a Markov chain to its long term behavior is not always
determined by a single eigenmodulus.  The precise relationship of a
Markov Chain's mixing time to the spectrum of its transition matrix is
surprisingly complicated  \cite{dia_1996, levperwil_2008}. Thus in
addition to second, here we also consider the third largest
eigenmodulus.  Since the transition matrices that we simulate in the
numerical experiments are all 5-by-5, these two parameters of a
simulated matrix provide significant information about its
spectrum.  We next describe a heuristic procedure to sample $\bm P$
that has a prescribed stationary distribution $\bm\pi$.

\subsection{Sampling of random transition matrices}\label{ss:stocmat}

The ensemble of $d\times d$ uniformly distributed transition matrices,
known as the Dirichlet Markov ensemble, can be represented by the
uniform distribution on  $\Delta_d^d$, the $d$-fold Cartesian product
of $\Delta_d = \{(\enum x d)\in [0,1]^d: \cum x d=1\}$.  The support
of this ensemble is a convex $d(d-1)$ dimensional polytope.  Uniform
sampling of $\bm P$ from the ensemble is easy because the rows of $\bm
P$ are \iid $\sim (\enum \xi d)/(\cum \xi d)$ with $\xi_i$ \iid\
exponentially distributed; see \cite{chafai_2010} for more detail.
However, in our study, we need to sample from the $(d-1)^2$
dimensional polytope of transition matrices that have a prescribed
stationary distribution $\bm \pi$.  To the best of our knowledge,
exact uniform sampling from this polytope is unknown for general
$\bm\pi$ \cite{capsombru_2009}.  An alternative to exact sampling is
to use asymptotically exact methods.  For example, \cite
{dialebmic_2012} constructs a Markov chain whose stationary
distribution is the uniform distribution on a convex polytope.  In
\cite{chadiasly_2010}, a Gibbs sampler is provided for the uniform
distribution on the set of doubly stochastic matrices, i.e.,
transition matrices whose transposes are also transition matrices.
However, doubly stochastic matrices are restrictive for our study, as
their stationary distribution is always the uniform one on 1, \ldots,
$d$.  To sample from the polytope of transition matrices with a
specified stationary distribution, we use a heuristic described
below.  It should be pointed out that matrices sampled this way are
not uniformly distributed on the polytope associated with $\bm\pi$.
Also, the complexity of the sampling increases fast as $d$ increases,
as the dimension of the polytope grows at the order of $d^2$.  We
mention that the properties of a ``typical'' matrix from the uniform
distribution on a closely related polytope of matrices was studied in
\cite{barvinok_2010}.

Sinkhorn's algorithm \cite{sinkhorn_1964, sinkhorn_1967} is an
iterative scaling procedure that maps each positive matrix, not
necessarily square, to a unique positive matrix with prescribed row
and column sums.  Its scheme is simply to scale the rows of a matrix
to obtain the prescribed row sums, then scale the columns of the
resulting matrix to obtain the prescribed column sums, and so on until
convergence.  The algorithm has been extended in many works (cf.\
\cite {bruparsch_1966, hartfiel_1971, eavcurhof_1985, marolk_1968,
  knight_2008, rotsch_1989}), and its geometric rate of convergence
and validity for irreducible nonnegative matrices have been
established \cite {fralor_1989}.  We will use the following version of
the algorithm \cite {sinkhorn_1967}, which, given a $d$-dimensional
probability vector $\bm\pi$ with all entries being positive, samples a
positive matrix $\bm A$ whose row sum and column sum vectors are both
equal to $\bm \pi$ up to a prescribed precision level.  Consequently,
the final output $\bm P$ is a transition matrix with stationary
distribution equal to $\bm\pi$ \cite{hartfiel_1974}.

\begin{enumerate}[itemsep=0ex, parsep=0ex, leftmargin=3ex, topsep=1ex]
\item[] \textbf{Sinkhorn's Algorithm}
\item[] Input: $\bm\pi\in$ Interior$(\Delta_d)$
\item[] Sample $\bm A$ uniformly from the set of $d\times d$
  transition matrices.  ($\bm A$ is positive w.p.~1.)

\item[] While $\|\bm A  \bm 1_d- \bm \pi \| > \rx$ or
  $\|\bm 1_d\tp \bm A - \bm \pi\tp \| >  \rx $

  \begin{enumerate}[itemsep=0ex, parsep=0ex, topsep=0ex,
    leftmargin=3ex]
  \item[] $\bm D^L \leftarrow [\diag(\bm A\bm 1_d)]^{-1}\diag(\bm
    \pi)$

  \item[] $\bm A \leftarrow \bm D^L \bm A$

  \item[] $\bm D^R \leftarrow [\diag(\bm 1_d \tp \bm A)]^{-1}
    \diag(\bm\pi)$

  \item[] $\bm A \leftarrow \bm A \bm D^R$

  \end{enumerate}

\item[] Return $\bm P = \diag(\bm\pi)^{-1}\bm A$
\end{enumerate}

Sinkhorn's algorithm partitions the set of irreducible nonnegative
matrices into equivalent classes, each consisting of matrices that
converge to the same transition matrix with stationary distribution
$\bm\pi$.  However, since there is no closed form characterization of
the equivalence, the sampling distribution of the algorithm in general
is unknown.  For the special case of $\bm\pi = \bm1_d/d$, the sampling
distribution is in fact not uniform but ``locally flat'' at the center
of the polytope of doubly stochastic matrices \cite {capsombru_2009}.

Despite its intractable sampling distribution, Sinkhorn's algorithm
has several advantages.  First, it allows prescribing the stationary
distribution of $\bm P$ exactly.  Second, it is computationally
tractable due to its scalability to high dimensions, geometric
convergence, and simplicity.  Third, it allows generation of
irreducible transition matrices containing one or more 0's by starting
with a nonnegative matrix with 0's in exactly the same entries \cite
{bruparsch_1966}. Consequently, it is applicable to simulation of
Markov chains on strongly connected finite directed graphs.  It is
known that under the uniform distribution on $\Delta^d_d$, all the
eigenvalues of a $d\times d$ transition matrix, except 1, converge in
probability to 0 as $d\toi$ \cite {golneu_2003, borcapcha_2012}.  Our
numerical experiments indicate that this is also the case under the
sampling distribution of Sinkhorn's algorithm.

To specify the size of the leading eigenmoduli of $\bm P$, choose
$\lambda \in [0,1)$ as a lower bound on $\slem(\bm P)$ and $\mu \in
[0,1)$, $\mu <\lambda$ as a lower bound on $\tlem(\bm P)$. For
$\lambda = 0$, we simply set $\bm P = \bm1\bm\pi\tp$, while for
$\lambda \ge 0$ and $\mu \ge 0$ we repeatedly sample $\bm P$ by
Sinkhorn's algorithm until $\slem(\bm P) \ge \lambda$ and $\tlem(\bm
P) \ge \mu$.

\subsection{Comparing Bayes BH and BH procedures}
\label{ss:compare}

After $\bm\pi$ and $\bm P$ are sampled and fixed, the simulation
proceeds in two steps.  First, one realization of $\bm\eta$ is sampled
and used to estimate $\mean(\eta_0)$, $\mean(\eta_i\gv \eta_0)$, and
$\mean(\eta_i\eta_j\gv\eta_0)$.  We denote this realization by
$\bm\theta$.  Then, an independent realization of $\bm\eta$, still
denoted by $\bm\eta$, is sampled and used to generate data $\bm X  =
\rx\bm\eta + \bm Z$, where the entries of $\bm Z$ are \iid$\sim
N(0,1)$ and $\rx>0$ is the signal strength.  In the simulation,
$\bm\eta$ can be viewed as a signal that needs to be sent through a
noisy environment.  To do this, we first estimate the moments and
conditional moments of the signal up to order two using a sample
$\bm\theta$.  Meanwhile, we estimate the distribution of the noise
$\bm Z$, for example, by sending a long sequence of 0's.  In
principle, we can have very good knowledge about the noise environment
by repeatedly testing it.  Then, based on the estimates, multiple
testing is used to recover new signals sent through the environment.

Given $\bm X$, $\pr(\eta_t=1\gv\bm X)$ are approximately evaluated
using the result in Example \ref{ex:add}.  To reduce computation, only
$\mean(\eta_s\gv \eta_t)$ and $\mean(\eta_s\eta_u\gv \eta_t)$ with
$|s-t|\le w$ and $|u-t|\le w$ are evaluated, where the ``half window
length'' $w\ge 0$ determines the number of moments that need to be
incorporated in the approximation.  All the other moments are treated
as 0.  In the absence of long range dependence, $w$ can be small.  The
approximated $\pr(\eta_t=1\gv\bm X)$ are then used by the Bayes BH
procedure.  On the other hand, the marginal one-sided $p$-values based
on individual $X_t$ are used by the BH procedure.

Let $\alpha$ be the nominal FDR control level for both the Bayes BH
and the BH procedures.  The Bayes BH procedure can estimate the
population proportion of false nulls using the observed sample $\bm
\theta$.  However, the BH procedure does not rely on $\bm \theta$.  To
level the ground of comparison, the nominal FDR control level for the
BH procedure is increased to
\begin{align*}
  \tilde \alpha = \alpha/(1-\tilde\psig)\quad
  \text{with }\tilde\psig = (\cum\eta m)/m,
\end{align*}
i.e., $\tilde\psig$ is the actual proportion of false nulls underlying
the data $\bm X = \rx\bm\eta + \bm Z$.  It is well known that the BH
procedure under the augmented FDR control level is more powerful while
still controlling the FDR at the nominal level \cite{benhoc_1995}.  In
contrast, the Bayes BH procedure has no access to statistics of $\bm
\eta$ or $\bm Z$.  It uses $\hat\psig=m^{-1} \sum \theta_t$ as the
estimate of the proportion of false nulls, and derives all the
estimates of moments and conditional moments from the observed sample
$\bm \theta$.

For a single instance of multiple testing, its performance can be
measured by the FDP and the number of true discoveries ($NTD$).  Both
measures are random variables as they are functions of the data as
well as the procedure being used.  Their population means and
variances can be used to assess the overall performance of a
multiple testing procedure, in particular, its validity, power, and
stability.  To estimate these parameters, the Bayes BH and the BH
procedures are applied to multiple replications of data, all of which
are generated with the same $\bm\eta$ but with independent
realizations of noise.  To be specific, given $\bm\eta$, the following
steps are taken.

\begin{enumerate}[itemsep=0ex, parsep=0ex, leftmargin=5ex,
  topsep=0ex]

\item Sample $\bm X = \rx \bm \eta + \bm Z$.  At nominal FDR control
  level $\alpha$, apply the Bayes BH procedure to the approximated
  $\pr(\eta_t=1\gv\bm X)$.  On the other hand, at the augmented
  nominal FDR control level $\tilde\alpha$, apply the BH procedure to
  the marginal $p$-values of $X_t$ under the null $\eta_t=0$.

\item Repeat the above step $n=1000$ times.  Compare the $\FDP$ and
  $\NTD$ of the Bayes BH procedure and those of the BH procedure,
  respectively, in terms of sample mean, sample standard error (SE),
  and empirical density.
\end{enumerate}

\subsection{Results}
First, the main parameters used in the simulations are as follows; see
Sections \ref{ss:hmc}--\ref{ss:compare} for detail.  The proportion of
false nulls $\psig = \pr(\eta_t=1)$ takes values in $\{0.05, 0.1 \}$,
the signal strength $\rx$ in $\{0.75, 1, 1.25\}$, and the nominal FDR
control level $\alpha$ in $\{0.1, 0.2, 0.3, 0.4, 0.5\}$. For the
purpose of these numerical experiments, an eigenmodulus of about 0.5
to 0.6 is considered to be moderate, an eigenmodulus of larger than
0.9 is considered to be large and an eigenmodulus smaller than 0.2 is
considered to be small.  Given half window size $w$, only
$\mean(\eta_s \gv\eta_t)$, $\mean(\eta_s\eta_u \gv \eta_t)$ with
$|s-t|\le w$ and $|u-t|\le w$ are used to approximate
$\pr(\eta_t=1\gv\bm X)$, where $\bm\eta = \tau(\bm M)$ and $\bm X =
\rx\bm\eta + \bm Z$.  If $\slem(\bm P)>0$, then $w=3$.  On the other
hand, if $\slem(\bm P) = 0$, i.e., $\eta_s$ are \iid$\sim \bm\pi$,
then $w=0$.  The length of $\bm\eta$ is $m = 10^5$.  Sample mean,
sample SE, and empirical density of the outcomes of multiple testing
are based on $n=1000$ replications of $\bm X$ with $\bm\eta$ being
fixed.

\subsubsection{Strong Dependence} \label{ss:strong}
\begin{table}
  \caption{Comparison of the Bayes BH (BBH) and the BH procedures by
    sample means and variations of FDP and NTD in the presence of
    strong dependence.  Here $\slem(\bm P)$ is large and $\tlem(\bm
    P)$ is small.  The transition matrix of the parent Markov chain is
    \eqref{e:strong} with $\slem(\bm P)=0.9074$, $\tlem(\bm P)=0.0645$
    and $\psig = 10\%$.
    \label{t:strong}}
  \begin{center}
    \begin{tabular}{|l|l||l|l||l|l|} \hline
      \multicolumn{2}{|c||}{} &
      \multicolumn{1}{c|}{BBH} & \multicolumn{1}{c||}{BH} &
      \multicolumn{1}{c|}{BBH} & \multicolumn{1}{c|}{BH} \\\hline
      $\rx$ & $\alpha$ & \multicolumn{2}{c||}{$\hat\mu_\FDP$} &
      \multicolumn{2}{c|}{$\SE(\hat\mu_\FDP)$}
      \\\hline
      0.75 & 0.4& 0.41246  & 0.40276   &   0.069156  & 0.12349 \\
      0.75& 0.5&  0.50544 &  0.49928 & 0.030706  &  0.051165 \\
      1 & 0.1& 0.0469 & 0.0999 & 0.1682 & 0.1633 \\
      1 & 0.15 &   0.1478 &  0.1465 &  0.1321 & 0.1324 \\
      1 & 0.2 & 0.2095 & 0.2043 & 0.0672 & 0.0995 \\
      1 & 0.2  & 0.2034  &  0.2010  & 0.0719  &  0.0968 \\
      1& 0.3 & 0.3085 &  0.3027 & 0.0325 & 0.0422 \\
      1.5 & 0.1& 0.10743  &  0.0994 & 0.016124 & 0.017342 \\
      1.5 & 0.2 & 0.20871 & 0.19912 & 0.010182 &
      0.012625\\\hline\hline
      $\rx$ & $\alpha$ &
      \multicolumn{2}{c||}{$\hat\mu_\NTD$} &
      \multicolumn{2}{c|}{$\SE(\hat\mu_\NTD)$}\\\hline
      0.75 & 0.4 & 29.259  &  34.002 & 6.6888  &  23.973   \\
      0.75 &0.5 & 134.2   &   131.19   &   13.801    &   57.623 \\
      1 & 0.1 &  1.0530  &  4.9680 &  1.4192 & 5.1406 \\
      1& 0.15& 8.0090 & 12.5270 & 3.8180 & 10.5292  \\
      1& 0.2 & 28.7990  &  30.8470 & 6.6985  & 18.7630 \\
      1 & 0.2 & 24.7470 & 30.1690 &  6.1466 & 18.8499 \\
      1 & 0.3 & 137.0970 & 132.3660 & 13.4961 & 43.9223 \\
      1.5 & 0.1 & 358.46 & 322.19  & 22.156 & 46.92 \\
      1.5 & 0.2 & 1096.5 &  1028.1  &  37.538 &   73.39 \\\hline
    \end{tabular}
  \end{center}
\end{table}

Table \ref{t:strong} displays the results of a set of simulations in
which $\slem(\bm P)$ and $\tlem(\bm P)$ are large and small respectively.
The parent Markov chain $\bm M$ has $\slem(\bm P) = 0.9074$ and
$\tlem=0.0645$,
\begin{align}\label{e:strong}
  \bm P =
  \begin{pmatrix}
    0.9286  &  0.0122 &   0.0276 &   0.0063 &   0.0253\\
    0.0150  & 0.8548  &  0.0651  &  0.0310  &   0.0340\\
    0.0250  &  0.9629 &   0.0097 &   0.0008 &   0.0016\\
    0.1653  &  0.7046 &   0.0955 &   0.0267 &   0.0079\\
    0.2716  &  0.6988 &   0.0257 &   0.0007 &   0.0033
  \end{pmatrix},
\end{align}
$\bm\pi = (0.3006, 0.5994, 0.0505, 0.0211, 0.0283)\tp$, and $\psig =
0.1$.  Both the Bayes BH and the BH procedures control the FDR around
the nominal level $\alpha$, with the former having somewhat smaller
variance of the FDP.  On the other hand,  while both procedures have
similar average NTD, the Bayes BH procedure has substantially less
variation in the NTD.  Thus, the approximate conditional probabilities
provide better stability than the marginal $p$-values.  In spite of
the large $\slem(\bm P)$, the half window length $w=3$ seems to work
well.  Table \ref{t:verystrong} displays the results of another set of
simulations in which $\slem(\bm P)$ is even larger, but still shows
similar phenomenon observed in Table~\ref{t:strong}.  In the
simulations, $\slem(\bm P) = 0.9566$,
\begin{align} \label{e:verystrong}
  \bm P =
  \begin{pmatrix}
    0.8993  &  0.0017522 &  0.0098301 &   0.031722 & 0.057401 \\
    0.015641 & 0.96471 & 0.0086498 & 0.0077613 & 0.0032366 \\
    0.66064 & 0.11572 &  0.075542 &  0.067761 &    0.080337 \\
    0.71019 &    0.090088 & 0.06356 &  0.13104 &   0.0051248\\
    0.57536  &   0.025292  &    0.27334  & 0.10032  & 0.025692
  \end{pmatrix}
\end{align}
with $\bm\pi = (0.67187, 0.22813, 0.02414, 0.033354, 0.042506)\tp$, and
$\psig = 0.1 $.

\begin{table}
  \caption{
    Comparison of the Bayes BH and BH procedures in the presence of
    strong dependence.  $\slem(\bm P)$ and $\tlem(\bm P)$ are again
    large and small respectively.  The transition matrix of the parent
    Markov chain is \eqref{e:verystrong} with $\slem=0.9566$,
    $\tlem=0.1155$ and $\psig = 10\%$.}
  \label{t:verystrong}
  \begin{center}
    \begin{tabular}{|l|l||l|l||l|l|} \hline
      \multicolumn{2}{|c||}{} &
      \multicolumn{1}{c|}{BBH} & \multicolumn{1}{c||}{BH} &
      \multicolumn{1}{c|}{BBH} & \multicolumn{1}{c|}{BH} \\\hline
      $\rx$ & $\alpha$ & \multicolumn{2}{c||}{$\hat\mu_\FDP$} &
      \multicolumn{2}{c|}{$\SE(\hat\mu_\FDP)$}
      \\\hline
      0.75 & 0.3 & 0.29201 &   0.3071  &  0.20283   &   0.20708
      \\
      0.75 & 0.4 & 0.39533  & 0.39839 & 0.066775  & 0.13137
      \\
      1 & 0.2 &    0.21889  &   0.20237 & 0.057124 &  0.096743
      \\
      1 & 0.2 &    0.21572  &   0.19602 & 0.056001 &   0.094159
      \\
      1&0.25& 0.25369& 0.2529 & 0.040186 &0.062352
      \\\hline\hline
      $\rx$ & $\alpha$ &
      \multicolumn{2}{c||}{$\hat\mu_\NTD$} &
      \multicolumn{2}{c|}{$\SE(\hat\mu_\NTD)$}\\\hline
      0.75 &  0.3 & 5.365  &  10.233 & 3.1205 & 9.6933
      \\
      0.75 & 0.4 & 36.139 &  35.77  &   7.9397   &  24.235
      \\
      1 & 0.2 & 44.426 & 27.889& 8.9892 & 17.99
      \\
      1 & 0.2 & 42.673   &   26.994  & 8.8113  &  17.099
      \\
      1&0.25 & 90.668 & 66.256 & 12.665  & 31.082
      \\\hline
    \end{tabular}
  \end{center}
\end{table}

In addition to summary statistics such as sample mean and sample SE,
density curves also show better stability of the Bayes BH procedure.
Figure~\ref{fig:strong} displays the kernel smoothing density curves
of the FDP and the NTD from several additional sets of simulations
with strong dependence (transition matrices not shown).  Consistent
with the above results, the distribution of the NTD of the Bayes BH
procedure is substantially more concentrated than that of the BH
procedure.

\begin{figure}
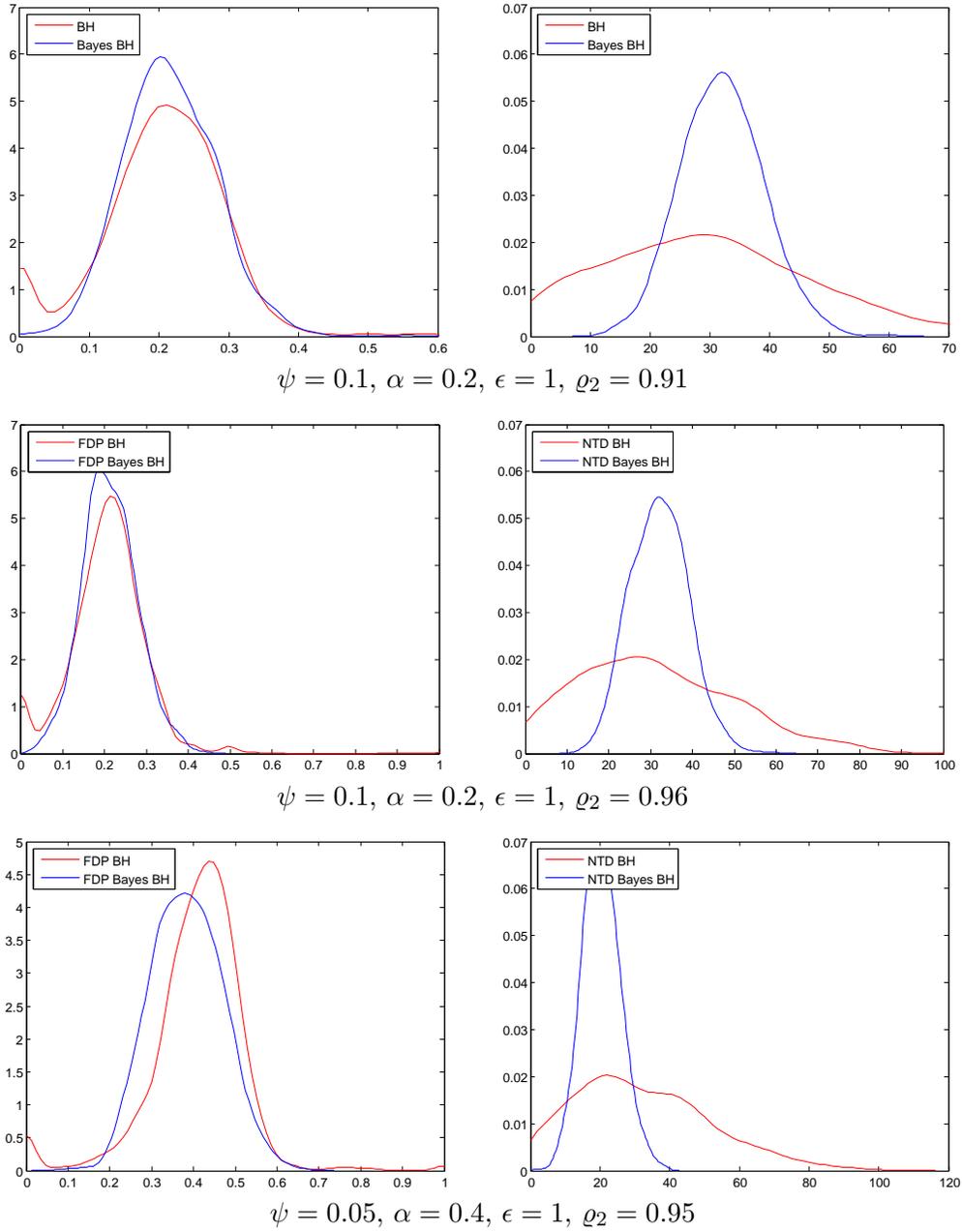

  \caption{Densities of the $\FDP$ (left) and $\NTD$ (right) of the
    Bayes BH and BH procedures in the presence of strong dependence of
    hypotheses}
  \label{fig:strong}
  \begin{center}
    \showfig{FDP_SLEM93_ALPHA02_EPS1-crop}\qquad
    \showfig{NTD_SLEM93_ALPHA02_EPS1-crop}
    \\
    $\psig=0.1$, $\alpha=0.2$, $\rx=1$, $\slem=0.91$
    \\[2ex]
    \showfig{FDP_SLEM96_ALPHA02_EPS1_2-crop}\qquad
    \showfig{NTD_SLEM96_ALPHA02_EPS1_2-crop}
    \\
    $\psig=0.1$, $\alpha=0.2$, $\rx=1$, $\slem=0.96$
    \\[2ex]
    \showfig{FDP_SLEM95_ALPHA04_EPS1_05-crop}\qquad
    \showfig{NTD_SLEM95_ALPHA04_EPS1_05-crop}
    \\
    $\psi = 0.05$, $\alpha=0.4$, $\rx=1$, $\slem=0.95$
  \end{center}
\end{figure}
\subsubsection{Additional effects of dependence structure on multiple
  testing}
Table \ref{t:verystrong_3} displays the results of a set of
simulations in which the dependence of the hypotheses is strong with
$\slem(\bm P) = 0.9591$.  However, the third largest eigenmodulus,
i.e. $\tlem(\bm P)$, is also sampled to be relatively large at
$0.5232$.  This makes the sampling of $\bm P$ more difficult.  In
contrast, in the simulations of Section \ref {ss:strong}, there is no
control on $\tlem(\bm P)$ and as a result, $\tlem(\bm P)$ is small.
In most cases, the value is less than 0.3.  For example, for the
transition matrices in \eqref{e:strong} and \eqref{e:verystrong}, the
value of $\tlem(\bm P)$ is 0.0645 and 0.1155, respectively.  The
transition matrix of the parent Markov chain for Table
\ref{t:verystrong_3} is
\begin{align} \label{e:verystrong_3}
  \bm P =
  \begin{pmatrix}
    0.0014  & 0.9653  & 0.0124 &   0.0080 &    0.0129 \\
    0.9506  &  0.0011  &  0.0056  &  0.0155 &   0.0272\\
    0.3236   & 0.1203  &  0.5312  &  0.0238 &   0.0011\\
    0.8080  &  0.1589  &  0.0190  &  0.0025  &  0.0115\\
    0.1898 &   0.7959  &  0.0053   & 0.0024  &  0.0066\\
  \end{pmatrix}
\end{align}
with $\bm\pi = (0.47319,0.47681,0.018907,0.01173,0.019363)\tp$ and
$\psig = 0.05$.  Note that the proportion of false nulls is only half
as large as the one in Section \ref{ss:strong}, making error control
more difficult.  Table~\ref {t:verystrong_3} shows that in this case,
even though the Bayes BH procedure has a harder time to control the
FDR, it has substantially less variation in the FDP than the BH
procedure does.  Further, the power of the Bayes BH procedure is
clearly superior to the BH procedure.  For example, at a realized FDP
$\approx 0.35$, the former makes about 204 true discoveries whereas
the latter makes about 149.  The Bayes BH procedure also has
significantly lower SE-to-mean ratio of the $\NTD$ than the BH
procedure.  The phenomenon was repeatedly observed when $\tlem(\bm P)$
was sampled to be large.  Figure~\ref{fig:strong_3} provides examples
of the phenomenon.  In contrast to Figure~\ref{fig:strong},
the FDP density curves of the two procedures exhibit pronounced
differences, and their NTD density curves are far apart instead of
overlapping with each other.

\begin{table}
  \caption{
    Comparison of the Bayes BH and BH procedures in the presence of
    strong dependence with large $\slem$  and a relatively large
    $\tlem$.  The transition matrix of the parent Markov chain is
    \eqref{e:verystrong_3} with $\slem = 0.9591$, $\tlem>0.5$, and
    $\psig=5\%$.}
  \label{t:verystrong_3}
  \begin{center}
    \begin{tabular}{|l|l||l|l||l|l|} \hline
      \multicolumn{2}{|c||}{} &
      \multicolumn{1}{c|}{BBH} & \multicolumn{1}{c||}{BH} &
      \multicolumn{1}{c|}{BBH} & \multicolumn{1}{c|}{BH} \\\hline
      $\rx$ & $\alpha$ & \multicolumn{2}{c||}{$\hat\mu_\FDP$} &
      \multicolumn{2}{c|}{$\SE(\hat\mu_\FDP)$}
      \\\hline
      1 & 0.2 & 0.29271  & 0.19  & 0.10223   &  0.23451 \\
      1 & 0.3 &  0.3676 & 0.30591 &  0.06527  &  0.19306 \\
      1 & 0.4 &   0.47105  & 0.40827 &  0.039127 &  0.11613 \\
      1.25 & 0.2 & 0.35611 &  0.1958  & 0.032559 &   0.086935 \\
      1.25 & 0.356 & 0.42145 & 0.35488 & 0.02103 & 0.041823
      \\\hline\hline
      $\rx$ & $\alpha$ &
      \multicolumn{2}{c||}{$\hat\mu_\NTD$} &
      \multicolumn{2}{c|}{$\SE(\hat\mu_\NTD)$}\\\hline
      1 &  0.2 & 23.739   & 9.0709 & 3.655 &    4.1191 \\
      1 & 0.3 &  53.006  &     13.288  &  11.094  &  9.1999  \\
      1 & 0.4 & 119.15    &   19.097  &     30.862  &  18.862 \\
      1.25 & 0.2 & 204.12 &  30.266  &   23.961  &  16.364 \\
      1.25 & 0.356 & 402.27  &  149.84 &  31.26  & 37.498 \\\hline
    \end{tabular}
  \end{center}
\end{table}

\begin{figure}
  \caption{Densities of the $\FDP$ and $\NTD$ of the Bayes BH and BH
    procedures when both $\slem$ and $\tlem$ are large}
  \label{fig:strong_3}
  \begin{center}
    \showfig{FDP_SLEM90_ALPHA02_EPS125_05-crop}\qquad
    \showfig{NTD_SLEM90_ALPHA02_EPS125_05-crop} \\
    $\psi = 0.05$, $\alpha=0.2$, $\rx=1.25$, $\slem=0.91$,
    $\tlem=0.382$ \\[2ex]
    \showfig{FDP_SLEM90_TLEM46_ALPHA02_EPS1_05-crop}
    \qquad
    \showfig{NTD_SLEM90_TLEM46_ALPHA02_EPS1_05-crop} \\
    $\psi = 0.05$, $\alpha=0.2$, $\rx=1$, $\slem=0.904$,
    $\tlem=0.463$ \\[2ex]
    \showfig{FDP_SLEM90_TLEM59_ALPHA02_EPS1_05-crop}
    \qquad
    \showfig{NTD_SLEM90_TLEM59_ALPHA02_EPS1_05-crop}
    \\
    $\psi = 0.05$, $\alpha=0.2$, $\rx=1$, $\slem = 0.904$,
    $\tlem=0.59$
  \end{center}
\end{figure}

As noted in the introduction, it is known that the BH procedure is
often able to control the FDR under dependence, and certain properties
of dependence are sufficient for this to happen \cite{wu_2008,
  benyek_2001, sarkar_2002, sarkar_2006, clahal_2009}.  Much less is
known about the relationship between properties of dependence and
other aspects of performance of multiple testing, such as stability
and power. One advantage of random sampling from a large space of
dependence structures is that it enables experimental investigation of
the relationship.  The properties of statistical dependence as
characterized by $\slem(\bm P)$ and $\tlem(\bm P)$ have been the focus
here.  The above results indicate that $\slem(\bm P)$ alone is not
sufficient to characterize the relative performances of the Bayes BH
and the BH procedures. This numerical observation is consistent with
what is now known about the mixing times of Markov chains \cite
{dia_1996, levperwil_2008}.  The results in Table \ref{t:verystrong_3}
and Figure~\ref {fig:strong_3} indicate that $\tlem(\bm P)$ may
sometimes play a role in shaping the differences.  In our simulations,
$\bm P$ is a $5\times 5$ matrix. Thus $\slem(\bm P)$ and $\tlem(\bm
P)$ together provide a significant part of information on the spectrum
of the transition matrix.  This points to a complex relationship
between the dependence structure of the underlying  hypotheses and the
relative performances of the Bayes BH and the BH procedures.

\subsubsection{Moderate dependence}
Table \ref{t:moderate} displays results for the case of moderate
dependence.  In this set of simulations, $\slem(\bm P)=0.6019$,
\begin{align} \label{e:moderate}
  \bm P=
  \begin{pmatrix}
    0.1999  &  0.6548 &   0.0537 &   0.0561 &   0.0354 \\
    0.8932  &  0.0765 &   0.0137 &   0.0088 &   0.0076 \\
    0.4431  &  0.4365 &   0.0721 &   0.0284 &   0.0199 \\
    0.1159  &  0.8271 &   0.0371 &   0.0129 &   0.0070 \\
    0.6080 &   0.0597 &   0.0505 &   0.0710 &   0.2108 \\
  \end{pmatrix}
\end{align}
with $\bm\pi = (0.4969, 0.4031, 0.0376, 0.0349, 0.0275)\tp$ and
$\psig= 0.1$.  From the table, the $\NTD$ of the Bayes BH procedure is
clearly more stable than that of the BH procedure.  This can also be
seen from the densities of the $\NTD$ shown in Figure~\ref
{fig:moderate}.  To a lesser extent, the $\FDP$ of the Bayes BH
procedure is also more stable than that of the BH procedure except
when $\rx = 0.75$, $\alpha = 0.3$ or $0.4$.  In the latter situation,
the variance of the $\FDP$ of the Bayes BH procedure is still
competitive with that of the BH procedure.

\begin{table}
  \caption{Comparison of the Bayes BH and BH procedures in the
    presence of moderate dependence.  The transition matrix of the
    parent Markov chain is \eqref{e:moderate} with $\slem=0.6019$ and
    $\psig = 10\%$}
  \label{t:moderate}
  \begin{center}
    \begin{tabular}{|l|l||l|l||l|l|} \hline
      \multicolumn{2}{|c||}{} &
      \multicolumn{1}{c|}{BBH} & \multicolumn{1}{c||}{BH} &
      \multicolumn{1}{c|}{BBH} & \multicolumn{1}{c|}{BH} \\\hline
      $\rx$ & $\alpha$ & \multicolumn{2}{c||}{$\hat\mu_\FDP$} &
      \multicolumn{2}{c|}{$\SE(\hat\mu_\FDP)$}
      \\\hline
       0.75 & 0.3 & 0.29201 &   0.3071  &  0.20283   &   0.20708
       \\
       0.75 & 0.4 & 0.22492 & 0.3014  & 0.2719  &   0.2124
       \\
       0.75 & 0.5 &  0.4966 &  0.50118 &  0.03134  &  0.045887
       \\
       1 & 0.2 &   0.192   &    0.2012 &   0.075112  &  0.084956
       \\
        1 & 0.3 & 0.294  &  0.30074  & 0.035074   &  0.040486
      \\\hline\hline
      $\rx$ & $\alpha$ &
      \multicolumn{2}{c||}{$\hat\mu_\NTD$} &
      \multicolumn{2}{c|}{$\SE(\hat\mu_\NTD)$}\\\hline
       0.75 & 0.3 &2.094  &  9.431 &  1.8608  &  9.2073
      \\
      0.75 & 0.4 &   22.28    &  38.658  &  5.6593  &  25.412
      \\
       0.75 & 0.5 & 121.26 & 138.26  &  12.793 & 60.366
       \\
      1 & 0.2 &  22.542 &  32.796   &  5.8157   &  18.614
      \\
      1 & 0.3 &  116.64  &  129.34  &  12.984   &  42.037
      \\\hline
    \end{tabular}
  \end{center}
\end{table}

\begin{figure}
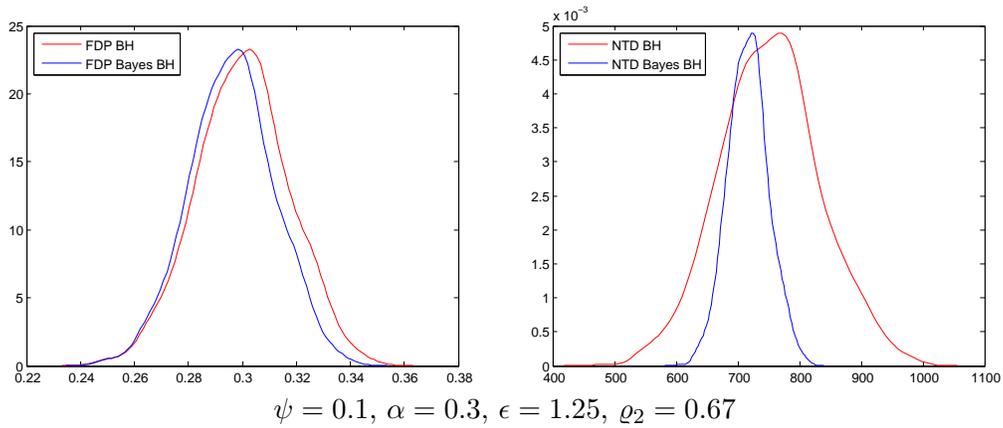

  \caption{Densities of the FDP and NTD of the Bayes BH and BH
    procedures in the presence of moderate dependence of hypotheses}
  \label{fig:moderate}
  \begin{center}
    \showfig{FDP_SLEM67_ALPHA03_EPS125-crop}\qquad
    \showfig{NTD_SLEM67_ALPHA03_EPS125-crop}
    \\
    $\psig=0.1$, $\alpha=0.3$, $\rx=1.25$, $\slem=0.67$
  \end{center}
\end{figure}

\subsubsection{Independence} \label{ss:ind}

Finally, as a test on the validity of the Bayes BH procedure,
Table \ref{t:indep} displays results when $\slem(\bm P)=0$ and, as a
result, the entries of $\bm\eta$ are \iid.  In the simulations,
$\psig = 0.05$ and $\bm P = \bm1_5\bm\pi\tp$, where the stationary
probability vector $\bm\pi$ is randomly sampled as in Section
\ref{ss:hmc}.  As can be seen, under independence, both the Bayes BH
procedure and  the BH procedure are valid.  In terms of power,
although the NTD of the Bayes BH procedure has a smaller average than
that of the BH procedure, its stability is clearly superior.  Similar
phenomenon can be seen from the density plot in
Figure~\ref{fig:moderate}.

\begin{table}
  \caption{Comparison of the Bayes BH and BH procedures when there is
    no dependence in data}
  \label{t:indep}
  \begin{center}
    \begin{tabular}{|l|l||l|l||l|l|} \hline
      \multicolumn{2}{|c||}{} &
      \multicolumn{1}{c|}{BBH} & \multicolumn{1}{c||}{BH} &
      \multicolumn{1}{c|}{BBH} & \multicolumn{1}{c|}{BH} \\\hline
      $\rx$ & $\alpha$ & \multicolumn{2}{c||}{$\hat\mu_\FDP$} &
      \multicolumn{2}{c|}{$\SE(\hat\mu_\FDP)$}
      \\\hline
      1 & 0.3 & 0.29028 &  0.29932 &     0.19437  &   0.18847\\
      1 & 0.4 &  0.39102  & 0.39308  &     0.081393  &   0.123 \\
      1 & 0.5 & 0.50282  &     0.50157  &  0.039488  &   0.065857\\
      1.25 & 0.2& 0.19442  &    0.19951 & 0.072535  &  0.083292
      \\\hline\hline
      $\rx$ & $\alpha$ &
      \multicolumn{2}{c||}{$\hat\mu_\NTD$} &
      \multicolumn{2}{c|}{$\SE(\hat\mu_\NTD)$}\\\hline
      1 & 0.3 & 5.322    &   11.16 &  2.9969         &   9.6522 \\
      1 & 0.4 &  22.881    &   32.613  &  5.9599    &    19.899 \\
      1 & 0.5 & 79.951    &   87.779  & 10.701       &    36.661 \\
      1.25 & 0.2 & 24.838  &   31.121  &  5.9329     &    15.893
      \\\hline
    \end{tabular}
  \end{center}
\end{table}

\begin{figure}
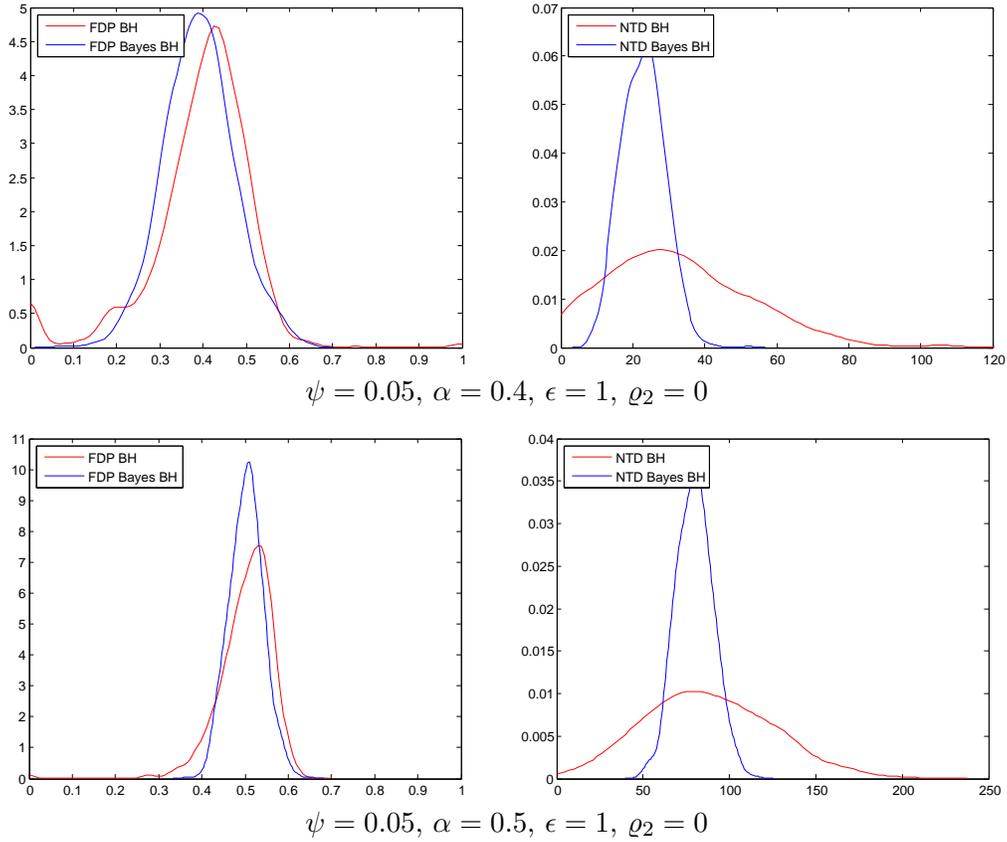

  \caption{Densities of the FDP and NTD of the Bayes BH and BH
    procedures when hypotheses are independent}
  \label{fig:ind}
  \begin{center}
    \showfig{FDP_IND2_ALPHA03_EPS1_05-crop}
    \qquad
    \showfig{NTD_IND2_ALPHA03_EPS1_05-crop}
    \\
    $\psi = 0.05$, $\alpha=0.4$, $\rx=1$, $\slem=0$ \\[2ex]
    \showfig{FDP_IND_W0_PSI05-crop}\qquad
    \showfig{NTD_IND_W0_PSI05-crop}
    \\
    $\psi = 0.05$, $\alpha=0.5$, $\rx=1$,  $\slem=0$
  \end{center}
\end{figure}

\section{Conclusion} \label{s:conclusions}
We presented a flexible signal processing model of large scale
multiple testing and used probability approximation to conduct
multiple testing.  This allowed us to empirically demonstrate that
even when very little is known about the dependence structure of the
hypotheses, substantial reduction in variance can be achieved by using
approximated conditional probabilities in multiple testing.
Using simulations designed to generate very general dependence
structures, we provided evidence for the effectiveness of the local
conditioning approach to FDR and power variance reduction in large
scale  multiple testing.  Conditional likelihood has been fruitfully
exploited in multiple testing \cite{efron_2001, efron_2004,
  efron_2007}.  Indeed, the notion of Local False Discovery Rate
introduced in \cite{efron_2001} is a conditional likelihood that
does not directly borrow strength from nearby observations.
In contrast, our approach explicitly incorporates nearby
observations into the conditional likelihood.  Our analysis indicates
that the reduction in the standard deviations of the FDP and
the number of true discoveries can be substantial when conditional
likelihood is used.  In particular, for very noisy data, even when
there is some increase in the FDR as compared to using $p$-values the
reduction in the standard deviation of FDR may justify a Bayesian
approach.

The effects of dependence on large scale multiple tests has been a
topic of research at least since the beginning of this century
\cite{benyek_2001}. Various special dependence structures between the
hypotheses have been studied and continue to be studied to ascertain
the effectiveness of both $p$-value based and Empirical Bayes methods
\cite{benyek_2001, wu_2008, stephens_2017} in multiple testing.
In addition, with some notable exceptions  \cite{owen_2005, fanhangu_2012,
qiukleyak_2005, schlin_2011,gorglaqiuyak_2007, li_2015}, the standard
deviations of the FDR and the number of true discoveries has usually
taken a second place to the estimates themselves. However, given that
it is often impractical to repeat a multiple test numerous times, not
controlling the standard deviation of the power and the FDR can lead
to highly biased estimates of FDR.  The need for simulating
general dependence structures is crucial to testing the
performance of our approach. This led us to the random HMM dependence
structures that can be generated by using random Markov matrices using
Sinkhorn's algorithm.

Our approach strikes a compromise between generality and tractability
by using the signal processing framework that models data as a
function of signal, signal strength, i.i.d.\ noise and signal noise
interaction.  The flexibility can be useful in application.  For
instance, additive noise is not the only signal-noise interaction that
is seen in fMRI practice \cite{mapach_2013}, so a different
signal-noise interaction function may improve the analysis.  Assuming
a fixed signal noise interaction and a fixed noise distribution is not
unrealistic in many applications.  The challenge is that they are
usually unknown.  Two important issues that we have not addressed here
is modeling real data with this approach and obtaining a ``noiseless''
sample of the data for measuring moments.  These two issues are
intimately connected to the application in question.  Obtaining a
noiseless sample of the signals is not very difficult in a signal
processing context.  However, in DNA microarray analysis and in fMRI
it remains a challenge to not just find a noiseless sample but also to
arrive at the signal strength and the signal noise interaction. A
possible solution might be to incorporate a parametric model to the
underlying binary process that identifies false nulls.  Thus our next
goal is the development of ideas that successfully apply the signal
processing and local dependence approach to multiple testing.

\begin{hide}
The connection of large scale multiple testing to signal processing is
obvious but the challenge in applications is often that very little is
known about the signal distribution, signal dependence, the noise
distribution and how they interact.  However this is not always the
case \cite{lindquist_2008, linmej_2015}.  Stochastic models of fMRI
data have been modeled in the signal processing framework using random
fields \cite{schgavadl_2011, chesch_2017}. The fMRI blood-oxygen-level
dependent (BOLD) response is the signal of interest.  Multiple types
of noise  including system-related instabilities, subject motion and
physiological fluctuation corrupt the signal \cite{lindquist_2008}.
The ``signal-to-noise-ratio''  captures  the signal strength; see
Figure 5 in  \cite{linmej_2015}.

We have developed a novel flexible stochastic modeling framework for
casting an arbitrary multiple test as a signal processing problem.  It
explicitly decomposes the data into four distinct inputs. A Bernoulli
signal process or network, a noise with an arbitrary continuous
distribution, a scalar signal strength and a signal-noise interaction
function. Each of these features is relevant to fMRI data. We place
very few assumptions on the dependence structure of the signals while
allowing local correlation. Our probality model allows an arbitrary,
but known, continuous noise distribution.   Our model allows an
arbitrary, but known, signal-noise interaction function.
\end{hide}

\bibliography{approx-pl}

\end{document}